\input{epsf}

\documentclass[conference]{IEEEtran}


\usepackage{epsf}
\usepackage{graphicx}
\usepackage[cmex10]{amsmath}
\usepackage{amsthm}
\usepackage{amssymb}
\usepackage{epsfig,latexsym,amsmath,epsf,amssymb,amsfonts}
\usepackage{algorithm, algorithmic}
\usepackage{placeins}
\usepackage{cite}
\usepackage{comment} 
\usepackage{subfigure}
\usepackage{lipsum}

\usepackage[top=0.75in, bottom=1in, left=0.625in, right=0.625in]{geometry}
\usepackage{bbm}

\newtheorem{Definition}{\hskip 0pt Definition}
\newtheorem{theorem}{\hskip 0pt Theorem}
\newtheorem{Lemma}{\hskip 0pt Lemma}
\newtheorem{Proposition}{\hskip 0pt Proposition}

\theoremstyle{remark}
\newtheorem{Remark}{\hskip 0pt Remark}
\addtolength{\textfloatsep}{-6mm}
\addtolength{\abovecaptionskip}{-2mm}

\begin{document}
%
\title{Power Allocation with Stackelberg Game in Femtocell Networks: A Self-Learning Approach\vspace*{-3mm}}



\author{
\IEEEauthorblockN{Wenbo Wang\IEEEauthorrefmark{1},
Andres Kwasinski\IEEEauthorrefmark{1} and
Zhu Han\IEEEauthorrefmark{2}}
\IEEEauthorblockA{\IEEEauthorrefmark{1}Department of Computer Engineering, Rochester Institute of Technology, NY, USA}
\IEEEauthorblockA{\IEEEauthorrefmark{2}Department of Electrical and Computer Engineering, University of Houston, TX, USA}\vspace*{-7mm}}


\maketitle \vspace{-3in}

\begin{abstract}
This paper investigates the energy-efficient power allocation for a two-tier, underlaid femtocell network. The behaviors of the Macrocell 
Base Station (MBS) and the Femtocell Users (FUs) are modeled hierarchically as a Stackelberg game. The MBS guarantees its own QoS requirement
by charging the FUs individually according to the cross-tier interference, and the FUs responds by controlling the local transmit power 
non-cooperatively. Due to the limit of information exchange in intra- and inter-tiers, a self-learning based strategy-updating mechanism is 
proposed for each user to learn the equilibrium strategies. In the same Stackelberg-game framework, two different scenarios based on the 
continuous and discrete power profiles for the FUs are studied, respectively. The self-learning schemes in the two scenarios are designed 
based on the local best response. By studying the properties of the proposed game in the two situations, the convergence property of the 
learning schemes is provided. The simulation results are provided to support the theoretical finding in different situations of the proposed 
game, and the efficiency of the learning schemes is validated.
\end{abstract}

\IEEEpeerreviewmaketitle
\section{Introduction}
Femtocells are considered to be an efficient and economic solution to enhance the indoor experience of the cellular mobile users 
\cite{6171992}. A femtocell is a low-power, short-range access point, which can be quickly deployed by the end-users. It provides better 
spatial reuse of the spectrum by serving the nearby users who have poorer connections with the Macrocell Base Station (MBS) due to 
penetration loss. In practice, the femtocell network usually operates underlaying the macrocell network. This is mainly due to the ad-hoc 
topology of the femtocells and thus the lack of coordination between the MBS and Femto Access Points (FAPs). Consequently, 
inter-cell/cross-tier interference arises, and interference mitigation becomes necessary for preventing performance deterioration. 

Due to the ad-hoc topology of the femtocells, the FAP deployment faces the limited information exchange both across tiers and among the 
femtocells. Therefore, it is desirable that the interference management of the femtocells is fully distributed, and each Femtocell User (FU) 
is capable of adapting to the surrounding environment with minimum information. With this in mind, we study the power control 
schemes for a shared-spectrum, two-tier femtocell network. We note that the Macrocell User (MU) prefers that the cross-tier 
interference is minimized, while the FUs prefers to transmit with the best Signal-to-Interference-plus-Noise-Ratio (SINR). Considering that 
private objectives contradict with each other, it is natural to introduce the tools of game theory and model the cross-tier, self-centric 
interactions in the framework of non-cooperative games. 

\subsection{Related Work}
Under the framework of non-cooperative games, the early study \cite{983324} has discovered that power control purely based on the 
non-cooperative games will lead to inefficient equilibria. In order to obtain the Pareto-preferred equilibria, a number of approaches 
including the introduction of repetition \cite{965533} or externalities (e.g., pricing) \cite{1337220, 5382529} are adopted in the research.
As shown by studies in non-hierarchical networks \cite{1337220, 5382529, 1523423}, choosing a proper pricing mechanisms with respect to 
different utility functions can be an efficient way of determining the desired properties of the equilibria.

When it comes to the resource allocation in hierarchical networks, such as the femtocell networks and cognitive radio networks, the Stackelberg 
game \cite{han2012game} based modeling is widely preferred since it is able to reflect the features of hierarchy and ad-hoc topology in the 
network \cite{Zhang:2009:SGU:1530748.1530753, 6171995}. the Stackelberg game is characterized by the sequential decision making manner 
(namely, the follower-leader strategy updating), and hence suitable for modeling the heterogeneous user behaviors in the network. Due to the 
computational intensity for obtaining the Stackelberg Equilibrium (SE), most of the existing studies \cite{6171995,5659869,5935038} adopt a
utility model that favors the derivation of a closed-form SE, and solve for the SEs through transforming the games as hierarchical 
optimization problems. 

Although the optimization-technique-based methods are able to precisely analyze the properties of the SEs, their scope is limited to 
the games with a certain categories of utility functions. Beyond those games, a natural idea is to resort to tools of 
iterative learning in repeated games for searching the SEs. A body of literature on non-hierarchical networks can be found applying 
iterative strategy-learning methods \cite{6117762, 5683552}, usually based on the assumption of the discrete strategy space \cite{4278411}.
However, since these learning methods assume homogeneous behaviors among the players, most of their application to hierarchical
networks are also limited within an uniform learning model \cite{6542770, 6654861}.

\subsection{Self-Learning under Pricing}
In this paper, we model the power allocation problem in the two-tier femtocell network from the perspective of the Stackelberg game. In the 
game, the MBS behaves as the leader and controls the total cross-tier interference by setting prices to each FU-FAP link. The FU-FAP links 
behave as the followers to optimize their energy efficiency through interactive power allocation. In designing the distributed, hierarchical 
power control scheme with pricing, the power efficiency is adopted as the utility of the FU-FAP link. we investigate the scenarios of the 
continuous and discrete power profile of the femtocells, respectively. Due to the computational complexity of obtaining the 
closed-form solution of the equilibrium, we investigate the property of the game in the two situations, and propose two self-centric 
strategy-learning algorithms for the follower game based on the local myopic best response, respectively. We provide the theoretical 
proof of the convergence of the learning schemes. Also, we provide two heuristic algorithms for obtaining the optimal MBS 
prices in the corresponding situations. In the simulation, we provide the experimental comparison on the network performance between the two 
scenarios, and demonstrate the efficiency of the proposed learning algorithms.

\section{Problem Formulation}
\subsection{Network Model}
We consider the uplink transmission of a two-tier femtocell network with a single MBS and $K$ FAPs. The MBS and the FAPs share the same 
bandwidth $W$ and each of them is scheduled to serve one single user at each time instance. The MBS is required to keep the 
femtocell-to-macrocell interference to an acceptable level. Due to the ad-hoc topology of the femtocells, we assume that the information 
exchange only happens between the MBS and FAPs. For analytical tractability, we suppose that all the channels involved are 
block-fading and remains constant during each transmission block. 

In what follows, we let $0$ denote the index of the MU-MBS pair and $k\in\mathcal{K}=\{1, \ldots, K\}$ denote the index of a FU-FAP pair. 
The channel power gain between the transmitter of pair $i$ and the receiver of pair $j$ is denoted by $h_{i,j}$, 
where $i,j\in\mathcal{K}\bigcup\{0\}$. The power of transmitter $k$ is denoted by $p_k$ and the power vector of all the FUs is denoted by
$\mathbf{p}=[{p}_1,\ldots, {p}_k]$. The noise variance for the transmitter-receiver pair $k$ is denoted by $N_k$. Then the SINR level at the
MBS can be expressed as:
\begin{equation}
 \label{eq1}
 \gamma_0(p_0,\mathbf{p})=\frac{h_{0,0}p_0}{N_0+\sum\limits_{k\in\mathcal{K}}h_{k,0}p_k},
\end{equation}
and the SINR level at the $k$-th FAP can be expressed as
\begin{equation}
 \label{eq2}
 \gamma_k(p_0,\mathbf{p})=\frac{h_{k,k}p_k}{N_k+h_{0,k}p_0+\sum\limits_{j\in\mathcal{K}\backslash\{k\}}h_{j,k}p_j}.
\end{equation}

During the operation, it is usually beneficial to shift some calls served by the MBS to the FAP. Therefore, we 
suppose that for the MBS, the requirement on the femtocell-to-macrocell interference is not rigid. Instead, the MU transmit with a fixed 
power and the MBS charges each FU-FAP link for causing interference with a certain price to control the interference level. We denote the 
vector of interference prices at a time interval by $\pmb\lambda\!=\![\lambda_1,\ldots, \lambda_K]$, in which $\lambda_k$ is the price for 
unit interference caused by FU $k$. The goal of the MBS is to maximize the total revenue of collecting payments from the FU-FAP links:
\begin{eqnarray}
 \label{eq3}
 \begin{array}{ll}
  \max\limits_{\pmb\lambda}\left(u_0\!=\!\sum\limits_{k\in\mathcal{K}}\lambda_kh_{k,0}p_k\right).
 \end{array}
\end{eqnarray}

For simplicity, in what follows we use the terms FU and the FU-FAP link interchangeably. For the FUs, we assume that each local transmit 
power $p_k$ is limited by the physical constraint $p_k\in[0, {p}^{\max}_{k}]$. The goal of FU $k$ is to maximize its local net payoff by 
adapting ${p}_k$:
\begin{equation}
 \label{eq4}
  \max\limits_{0\le{p}_k\le{p}^{\max}_{k}}\bigg(u_k=\psi(\gamma_k, {p}_k)-\lambda_sh_{k,0}p_k\bigg),
\end{equation}
in which $\psi(\gamma_k, {p}_k)$ is the utility function of FU $k$. Considering the practical scenarios, we adopt 
the local utility $\psi(\gamma_k, p_k)$ as the energy efficiency (namely, the data received per unit energy consumption) of each FU:
\begin{equation}
 \label{eq9} 
 \psi(\gamma_k, \mathbf{p}_k)=\frac{W\log(1+\gamma_k)}{p_k+ p_a},
\end{equation}
where $p_a$ denotes the additional circuit power consumption for FU $k$. We assume that the local SINR $\gamma_k$ can be perfectly measured 
at the FAP. For the proposed femtocell network, we suppose that the following assumptions hold:
\begin{itemize}
 \item[i)] $p_k$ is significantly greater than $p_a$;
 \item[ii)] The femtocell-to-femtocell interference is sufficiently small.
\end{itemize}
Assumption i) is based on the fact that most power is consumed during operation by the amplifier and the radio transceiver 
\cite{5783984}. Assumption ii) is based on the practical concerns that (a) the FAPs are of low power, so the peak transmit 
power is limited; and (b) the intercell gains between femtocells are usually weak due to the path loss (since the indoor penetration loss 
is usually significant). Based on the assumptions, we assume that the SINR constraint for a FU-FAP link is negligible.

\section{Stackelberg Game Analysis}
We model the user interactions in the proposed network as a hierarchical game with the MBS and the FU-FAP links choosing their actions in a 
sequential manner. When the power allocation of the FUs are given as $\mathbf{p}$, we can define the leader game from the 
perspective of the MBS as:
\begin{equation}
 \label{eq5}
  \mathcal{G}_l={\langle}\pmb{\Lambda}, \{u_0(\pmb\lambda, \mathbf{p})\}{\rangle},
\end{equation}
in which the MBS is the only player of the game and the action of the player is the price vector $\pmb\lambda\in\pmb\Lambda$. 

When the leader action is given by $\pmb{\lambda}$, we can define the non-cooperative follower game from the perspective of the FUs as:
\begin{equation}
 \label{eq6}
  \mathcal{G}_f={\langle}\mathcal{K},\mathcal{P}, \{u_k(\pmb\lambda, \mathbf{p}_k, \mathbf{p}_{-k})\}_{k\in\mathcal{K}}{\rangle},
\end{equation}
in which FU $k$ is one player with the local action $\mathbf{p}_k\in\mathcal{P}_k$.

With each player behaving rationally, the goal of the game (\ref{eq5}) and (\ref{eq6}) is to finally reach the Stackelberg Equilibrium (SE) 
in which both the leader (MBS) and the followers (FUs) have no incentive to deviate. We first assume that the strategies of the FUs and MBS 
are continuous. Then the SE of the game can be mathematically defined as follows:
\begin{Definition}[Stackelberg Equilibrium]
\label{def_se}
 The strategy $(\pmb\lambda^*, \mathbf{p}^*)$ is a SE for the proposed Stackelberg game (\ref{eq5}) and (\ref{eq6}) if
 \begin{eqnarray}
 \label{eq7}
  & u_0(\pmb\lambda^*, \mathbf{p}^*){\ge}u_0(\pmb\lambda, \mathbf{p}^*), \forall \pmb\lambda{\in}\pmb\Lambda,\\
  \label{eq8}
  & u_k(\pmb\lambda^*, \mathbf{p}^*_k, \mathbf{p}^*_{-k}){\ge}u_k(\pmb\lambda^*, \mathbf{p}_k, \mathbf{p}^*_{-k}), \forall k\in\mathcal{K}, 
  \forall \mathbf{p}_k\in{\mathcal{P}_k}.  
\end{eqnarray}
\end{Definition}

\subsection{Femtocell Power Allocation with Continuous Strategies}
We start the analysis of the Stackelberg game in (\ref{eq5}) and (\ref{eq6}) by back induction. Suppose that the MBS first sets its strategy 
as $\pmb\lambda$, then we obtain the non-cooperative follower subgame described by (\ref{eq6}). In order to show the existence of a 
pure-strategy Nash Equilibrium (NE) in $\mathcal{G}_f$, we introduce the concept of supermodular game as follows.
\begin{Definition}[Supermodularity\cite{han2012game}]
 \label{de_id}
 A function $f:\mathcal{X}\times\mathcal{T}\rightarrow \mathbb{R}$ is said to have increasing differences (supermodularity) in $(x,t)$ if for 
 all $x'\ge x$ and $t'\ge t$,
 \begin{equation}
 \label{eq_sm}
  f(x',t')-f(x,t')\ge f(x',t)-f(x,t).\nonumber 
 \end{equation}
\end{Definition}

\begin{Definition}[Supermodular game \cite{han2012game, 1198610}]
 \label{def_sg}
 A general normal-form game ${\langle}\mathcal{N}, \{\mathcal{S}_i\}_{i\in\mathcal{N}}, \{u_i\}_{i\in\mathcal{N}}{\rangle}$ is 
 a supermodular game if for any player $i\in\mathcal{N}$,
 \begin{itemize}
  \item[i)] the strategy space $\mathcal{S}_i$ is a compact subset of $\mathbb{R}^K$.
  \item[ii)] the payoff function $u_i$ is upper semi-continuous in $\mathbf{s}=(\mathbf{s}_i,\mathbf{s}_{-i})$.
  \item[iii)] $u_i$ is supermodular in $\mathbf{s}_i$ and has increasing difference between any component of $\mathbf{s}_i$ and any component of 
  $\mathbf{s}_{-i}$.
 \end{itemize}
\end{Definition}  

The supermodular property of the proposed follower subgame (\ref{eq6}) is given by Theorem \ref{Thm_SG}.
\begin{theorem}
 \label{Thm_SG}
 Given the strategy $\pmb\lambda$ of the MBS, the follower subgame (\ref{eq6}) is a supermodular game if $\gamma_k\ge{p_a}/{p_k}$.
\end{theorem}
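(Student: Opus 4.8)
The plan is to verify the three conditions of Definition~\ref{def_sg} for the follower game $\mathcal{G}_f$, the first two of which are essentially immediate. Condition i) holds because each player's strategy set $\mathcal{P}_k=[0,p_k^{\max}]$ is a closed bounded interval, hence compact. Condition ii) holds because $u_k$ is in fact continuous (indeed smooth) in $\mathbf{p}$: the SINR $\gamma_k$ is a smooth function of $\mathbf{p}$ whose denominator $N_k+h_{0,k}p_0+\sum_{j\neq k}h_{j,k}p_j$ is bounded below by $N_k>0$, and $p_k+p_a\ge p_a>0$, so no singularity occurs on the compact box. The substance of the theorem therefore lies in condition iii).

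Since every player $k$ controls a single scalar $p_k$, supermodularity of $u_k$ in its own action $\mathbf{p}_k$ is vacuous (there are no distinct own-components to compare), so the only thing to establish is increasing differences between $p_k$ and each competitor power $p_j$, $j\in\mathcal{K}\backslash\{k\}$. For the twice-differentiable payoff $u_k$ this is equivalent to the cross-partial condition
\begin{equation}
\frac{\partial^2 u_k}{\partial p_k\,\partial p_j}\ge 0,\qquad \forall\, j\in\mathcal{K}\backslash\{k\}.\nonumber
\end{equation}
First I would observe that the pricing term $-\lambda_k h_{k,0}p_k$ is linear in $p_k$ and independent of every $p_j$, so it contributes nothing to this cross-partial; the whole computation reduces to the energy-efficiency utility $\psi(\gamma_k,p_k)$.

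Writing $I_k=N_k+h_{0,k}p_0+\sum_{j\neq k}h_{j,k}p_j$ so that $\gamma_k=h_{k,k}p_k/I_k$, I would record the two elementary partials $\partial\gamma_k/\partial p_k=\gamma_k/p_k>0$ and $\partial\gamma_k/\partial p_j=-\gamma_k h_{j,k}/I_k<0$. Differentiating $\psi=W\log(1+\gamma_k)/(p_k+p_a)$ first in $p_k$ and then in $p_j$, and inserting the first and second derivatives of $\log(1+\gamma)$, the cross-partial factors as
\begin{equation}
\frac{\partial^2\psi}{\partial p_k\,\partial p_j}
=\frac{W}{(p_k+p_a)^2}
\left\{\frac{p_k+p_a}{p_k(1+\gamma_k)^2}-\frac{1}{1+\gamma_k}\right\}
\frac{\partial\gamma_k}{\partial p_j}.\nonumber
\end{equation}
The prefactor is positive and $\partial\gamma_k/\partial p_j<0$, so the sign of the cross-partial is opposite to that of the brace. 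A one-line simplification shows the brace is nonpositive precisely when $p_k+p_a\le p_k(1+\gamma_k)$, i.e. when $\gamma_k\ge p_a/p_k$ --- exactly the hypothesis of the theorem. Under this condition the cross-partial is nonnegative for every $j$, the increasing-difference property holds, and $\mathcal{G}_f$ is supermodular.

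I expect the only real obstacle to be the bookkeeping in the cross-partial: because $p_k$ appears both inside the SINR and in the efficiency denominator $p_k+p_a$, one must differentiate carefully and resist treating $\gamma_k$ as if it were simply proportional to $p_k$ while $p_j$ varies. Once the expression is organized so that the common factor $\partial\gamma_k/\partial p_j$ is pulled out, the threshold $\gamma_k\ge p_a/p_k$ emerges directly and the result becomes a clean restatement of the supermodularity test rather than a separate estimate. It is also worth remarking that this threshold is mild under Assumption i) ($p_k\gg p_a$), which makes $p_a/p_k$ small so that any reasonable operating SINR satisfies it.
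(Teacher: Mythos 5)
Your proposal is correct and follows essentially the same route as the paper's Appendix~A: verify conditions i) and ii) of Definition~\ref{def_sg} trivially, then reduce condition iii) to the cross-partial test $\partial^2 u_k/\partial p_k\partial p_j\ge 0$ (the paper's Lemma~\ref{le_td}) and show this holds exactly when $\gamma_k\ge p_a/p_k$; your factored form of the cross-partial agrees with the paper's three-term expression (\ref{eq_thm2_3}) after simplification. The only differences are presentational --- you pull out the common factor $\partial\gamma_k/\partial p_j$ and note explicitly that the linear pricing term and the scalar own-action both drop out, which the paper leaves implicit.
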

\begin{proof}
 See Appendix A.
\end{proof}

Given the opponent strategies $p_{-k}$, we define the local Best Response (BR) of FU $k$ by
\begin{equation}
 \label{eq10}
  \hat{p}_k({p}_{-k})=\mathop{\arg\max}\limits_{0\le{p}_k\le{{p}^{\max}_{k}}}\bigg(u_k=\psi(\gamma_k, p_k)-\lambda_sh_{k,0}p_k\bigg).
\end{equation}
Then Theorem \ref{Thm_SG} immediately yields Proposition \ref{Prop1} \cite{1198610}:
\begin{Proposition}
 \label{Prop1}
 At least one pure-strategy NE exists in the follower subgame (\ref{eq6}) and the following points hold:
 \begin{itemize}
  \item [i)] The set of NEs (\ref{eq8}) has the component-wise greatest element $\overline{\mathbf{p}^*}$ and least element 
  $\underline{\mathbf{p}^*}$.
  \item [ii)] If the BRs are single-valued, and each FU uses the BR starting from the smallest (largest) elements of the strategy space to 
  update their strategies, then the strategies monotonically converge to the smallest (largest) NE.
  \item[iii)] If the game has a unique NE, then with any arbitrary initial strategy, the local myopic BRs converges to the NE.
 \end{itemize}
\end{Proposition}

The properties of $\mathcal{G}_f$ in Proposition \ref{Prop1} sheds light on the solution to the local strategy-learning scheme for the FUs. 
To take advantages of these properties, we first examine the BRs in $\mathcal{G}_f$ and obtain Lemma \ref{le_single_value}:
\begin{Lemma}
 \label{le_single_value}
 Given the MBS strategy $\pmb\lambda$, $u_k$ is strictly quasiconcave and the best-response $\hat{p}_k$ is single-valued for each FU.
\end{Lemma}
\begin{proof}
 See Appendix B.
\end{proof}
Based on Lemma \ref{le_single_value}, we can further prove that the BR of FU $k$ has the following properties, and therefore is a standard 
function \cite{han2012game}:
\begin{itemize}
 \item [i)] positivity: $\hat{p}_k>0$,
 \item [ii)] monotonicity: for any $p'_{-k},p_{-k}\in\mathcal{P}_{-k}$, if $p'_{-k}>p_{-k}$, then $\hat{p}_k(p'_{-k})\ge\hat{p}_k(p_{-k})$,
 \item [iii)] scalability: $\forall \alpha>1$, $\alpha\hat{p}_k(p_{-k})>\hat{p}_k(\alpha p_{-k})$.
\end{itemize}
Based on the aforementioned properties, we can directly apply the results in \cite{han2012game} and deduce the uniqueness of the pure-strategy NE 
in the follower subgame $\mathcal{G}_f$.
\begin{theorem}
 \label{Thm_SNE}
 Given any MBS strategy $\pmb\lambda$, the follower subgame (\ref{eq6}) has a unique NE if the following condition is satisfied
 \begin{equation}
 \label{eq11}
 \frac{h_{k,k}p_k}{N_k+h_{0,k}p_0+\sum_{j\in\mathcal{K}}h_{j,k}p_k}\ge\frac{p_a}{p_k}.
\end{equation}
\end{theorem}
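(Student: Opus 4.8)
The plan is to reduce the uniqueness of the NE to the uniqueness of the fixed point of the best-response map, and then invoke the standard interference function framework of \cite{han2012game}. By Lemma \ref{le_single_value} the best response $\hat{p}_k(\mathbf{p}_{-k})$ in (\ref{eq10}) is single-valued, so a strategy profile $\mathbf{p}^*$ is an NE of $\mathcal{G}_f$ in the sense of (\ref{eq8}) if and only if it is a fixed point of the map $T(\mathbf{p})=\big(\hat{p}_1(\mathbf{p}_{-1}),\ldots,\hat{p}_K(\mathbf{p}_{-K})\big)$. The existence of at least one such fixed point is already guaranteed by Proposition \ref{Prop1}, so the entire task is to establish \emph{uniqueness}.

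First I would obtain a workable characterization of $\hat{p}_k$. Writing the aggregate interference-plus-noise seen by FU $k$ as $I_k=N_k+h_{0,k}p_0+\sum_{j\in\mathcal{K}\backslash\{k\}}h_{j,k}p_j$, we have $\gamma_k=h_{k,k}p_k/I_k$, so $\hat{p}_k$ depends on $\mathbf{p}_{-k}$ only through $I_k$. Since condition (\ref{eq11}) is precisely the regularity $\gamma_k\ge p_a/p_k$ that drives Theorem \ref{Thm_SG}, Lemma \ref{le_single_value} applies and $u_k$ is strictly quasiconcave; hence $\hat{p}_k$ is the unique stationary point of $u_k$ on $[0,p^{\max}_k]$, characterized by the first-order condition $\partial u_k/\partial p_k=0$. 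This implicit relation between $\hat{p}_k$ and $I_k$ is the object on which the three standard-function properties must be checked.

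Next I would verify that $T$ is a \emph{standard interference function}, i.e.\ that each component is positive, monotone and scalable in $\mathbf{p}_{-k}$, exactly the three properties listed before the theorem. For positivity, evaluating $\partial u_k/\partial p_k$ at $p_k=0^+$ shows that the marginal energy-efficiency gain strictly exceeds the marginal price under (\ref{eq11}), so the maximizer lies strictly inside the feasible interval and $\hat{p}_k>0$. For monotonicity, the increasing-differences property from Theorem \ref{Thm_SG} gives that raising any $p_j$ (hence $I_k$) weakly raises the maximizer, so $\hat{p}_k(\mathbf{p}'_{-k})\ge\hat{p}_k(\mathbf{p}_{-k})$ whenever $\mathbf{p}'_{-k}\ge\mathbf{p}_{-k}$. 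Scalability, $\alpha\hat{p}_k(\mathbf{p}_{-k})>\hat{p}_k(\alpha\mathbf{p}_{-k})$ for every $\alpha>1$, I would obtain by comparing the first-order conditions at interference levels $I_k$ and $\alpha I_k$ and exploiting the strict concavity (diminishing returns) of $\log(1+\gamma_k)$: when the interference is inflated multiplicatively, the energy-efficiency-optimal power grows strictly sublinearly and cannot keep pace with the factor $\alpha$.

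With positivity, monotonicity and scalability in hand, $T$ is a standard interference function, and the fixed-point uniqueness theorem for such functions as stated in \cite{han2012game} shows that $T$ admits at most one fixed point; combined with the existence from Proposition \ref{Prop1}, the NE of $\mathcal{G}_f$ is unique. I expect the scalability step to be the main obstacle: unlike positivity and monotonicity, it does not follow from the supermodular structure of Theorem \ref{Thm_SG}, and it must be extracted from the specific logarithmic shape of the energy-efficiency utility (\ref{eq9}) together with condition (\ref{eq11}), which is what ties the sublinear response of $\hat{p}_k$ to a multiplicative scaling of the interference.
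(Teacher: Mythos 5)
Your overall architecture is exactly the paper's (Appendix C): show the single-valued best response $\hat p_k$ is a \emph{standard} function (positive, monotone, scalable) and invoke the uniqueness result for standard interference functions from \cite{han2012game}. Two of your sub-arguments, however, diverge from the paper in ways worth noting. For monotonicity you appeal to increasing differences from Theorem \ref{Thm_SG} (a Topkis-type comparative-statics argument); that is legitimate and cleaner, but it sidesteps the step where the paper actually \emph{derives} condition (\ref{eq11}): the paper differentiates the first-order condition $\omega(p_k,I_k)=0$ implicitly, bounds $\partial\omega/\partial I_k$ with $\log(1+x)\ge x/(1+x)$, and obtains $\partial \hat p_k/\partial I_k\ge 0$ precisely when $p_k^2\ge p_a(I_k+p_k)$, which is (\ref{eq11}); under your route the specific form of (\ref{eq11}) (as opposed to the weaker $\gamma_k\ge p_a/p_k$ of Theorem \ref{Thm_SG}) never appears. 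For scalability, your sketch attributes the strict inequality $\alpha\hat p_k(p_{-k})>\hat p_k(\alpha p_{-k})$ to the concavity of $\log(1+\gamma_k)$ under the substitution $I_k\mapsto\alpha I_k$; but scaling $p_{-k}$ by $\alpha$ does \emph{not} scale the interference by $\alpha$, since $I_k$ contains the fixed term $N_k+h_{0,k}p_0$. The paper's argument instead writes $\hat p_k=\hat\gamma_k\,(N_0+h_{0,k}p_0+J_k)/h_{k,k}$ with $J_k=\sum_{j\ne k}h_{j,k}p_j$, shows $\hat\gamma_k$ is non-increasing in $J_k$ (by the same implicit-differentiation technique as monotonicity), and gets strictness essentially for free from the affine structure $N_0+h_{0,k}p_0+\alpha J_k\le\alpha(N_0+h_{0,k}p_0+J_k)$; relying on curvature of the log alone would require a genuinely separate sublinearity estimate that you have not supplied. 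So the plan is sound and matches the paper's route, but the scalability step as sketched would need to be redone along the SINR-decomposition lines above to close the argument.
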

\begin{proof}
 See Appendix C.
\end{proof}

\begin{Remark}
 If in subgame $\mathcal{G}_f$ the condition of Theorem \ref{Thm_SNE} is satisfied, the condition of Theorem \ref{Thm_SG} will also be 
 satisfied. (\ref{eq11}) indicates that to ensure the uniqueness of the NE, the SINR of a FU should be significantly larger than the ratio 
 between its transmit power and circuit power. Such a condition is guaranteed by our assumption of the network.
\end{Remark}

We assume that the channel power gain $h_{k,0}$ is known and the SINR $\gamma_k$ can be perfectly measured by each FAP. Then, based on Lemma 
\ref{le_single_value}, $\hat{p}_k$ can be solved locally with the bisection method \cite{boyd2004convex}. Based on Proposition \ref{Prop1} and 
Theorem \ref{Thm_SNE}, the asynchronous strategy-updating mechanism defined in \cite{1198610} can be directly applied to $\mathcal{G}_f$. By 
Proposition \ref{Prop1}, the  convergence to the NE is guaranteed from any arbitrary initial power vector. The strategy-learning 
algorithm is summarized as Algorithm 1:
\begin{algorithm}[htb]
 \label{alg1}
 \caption{Asynchronous strategy updating}
 \begin{algorithmic}[1]
 \REQUIRE 
 each FU sets up an infinite increasing time sequence $\{T^i_k\}_{k\in\mathcal{K}}$ for scheduling strategy update.
 \FORALL {$t\in\{T^i_k\}_{k\in\mathcal{K}}$}
  \FORALL {$k$ s.t. $t=\{T^i_k\}$}
  \STATE given ${p}^{t-1}_{-k}$, obtain $p_k^t=\hat{p}_k(p^{t-1}_{-k})$ as in (\ref{eq10}) with bisection.
  \ENDFOR
 \ENDFOR
 \end{algorithmic}
\end{algorithm}

\subsection{Approximate Solution to the Price of MBS}
When considering the leader subgame $\mathcal{G}_l$, we assume that the strategies of the FUs are given as $\mathbf{p}$ from (\ref{eq10}). For 
the subgame (\ref{eq5}), the local BR is given by
\begin{equation}
 \label{eq13}
 \hat{\pmb\lambda}=\arg\max\limits_{\pmb\lambda{\succeq}0}\left(\sum_{k\in\mathcal{K}}\lambda_kh_{k,0}p_k \right).
\end{equation}

To investigate the solution of (\ref{eq13}), we first consider the feasible region of $\lambda_k$. We note from (\ref{eq10}) that the maximum 
value of $u_k$ is lower-bounded by $0$ (when $p_k\!=\!0$) and upper-bounded by $\psi(\tilde{\gamma}_k, \tilde{p}_k)$, in which $(\tilde{\gamma}_
k, \tilde{p}_k)=\arg\max\psi({\gamma}_k,{p}_k)$. Thereby, the price $\lambda_k$ charged by the MBS is also upper-bounded. Otherwise, if 
$\lambda_k$ is too high (i.e., making $\psi(\tilde{\gamma}_k, \tilde{p}_k)\!<\!\lambda_sh_k{p}_k$), FU $k$ will stop transmitting 
and be forced out of the game. With the aforementioned bound on $u_k$, we look for the constraint on  $\lambda_k$ in (\ref{eq10}). 
However, since $u_k$ in (\ref{eq10}) is a transcendental function, it is difficult to derive a closed-form 
expression of the constraint on $\lambda_k$. Then, the challenge of analyzing $\mathcal{G}_l$ is to find an efficient way for 
obtaining the optimal price $\hat{\pmb\lambda}$.

By jointly investigating the leader and the follower subgames, we can show that a finite, optimal price $\hat{\lambda}_k$ for each FU coexists 
with the NE of the follower subgame.
\begin{theorem}
 \label{Them_SE}
 In the Stackelberg game defined by (\ref{eq5}) and (\ref{eq6}), at least one pure-strategy SE with finite price vector $\hat{\pmb
 \lambda}$ from the leader game exists.
\end{theorem}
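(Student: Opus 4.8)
The plan is to proceed by backward induction, as in the analysis of $\mathcal{G}_l$ above: first fix the leader price and use the follower analysis to obtain a well-defined follower response, then show the leader's induced revenue attains a maximum at a finite price. By Theorem~\ref{Thm_SNE}, for every $\pmb\lambda\succeq 0$ the follower subgame $\mathcal{G}_f$ has a unique pure-strategy NE, which I denote $\mathbf{p}^*(\pmb\lambda)$. Substituting it into the leader objective yields the single-valued induced revenue $\tilde{u}_0(\pmb\lambda)=\sum_{k\in\mathcal{K}}\lambda_k h_{k,0}\,p^*_k(\pmb\lambda)$, and the claim reduces to showing that $\tilde{u}_0$ attains its supremum over $\pmb\lambda\succeq 0$ at some \emph{finite} $\hat{\pmb\lambda}$; the accompanying $\mathbf{p}^*(\hat{\pmb\lambda})$ then completes the SE, since (\ref{eq7}) holds by optimality of $\hat{\pmb\lambda}$ and (\ref{eq8}) holds because $\mathbf{p}^*(\hat{\pmb\lambda})$ is the follower NE.

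The first main step is to confine the leader search to a compact price box. Using the bounds on $u_k$ established above, at the follower NE each FU can guarantee $u_k\ge 0$ by choosing $p_k=0$, so $\lambda_k h_{k,0}\,p^*_k(\pmb\lambda)\le\psi(\gamma_k,p^*_k)\le\psi(\tilde{\gamma}_k,\tilde{p}_k)$; hence every per-user revenue term, and thus $\tilde{u}_0$ itself, is bounded above by the finite $\sum_{k\in\mathcal{K}}\psi(\tilde{\gamma}_k,\tilde{p}_k)$. Moreover, there is a finite drop-out threshold $\bar{\lambda}_k$ above which $p^*_k=0$: the utility of FU $k$ is largest in the interference-free best case $\mathbf{p}_{-k}=0$, and $\bar{\lambda}_k$ is the price at which even that best-case maximum of $u_k$ falls to $0$, so for any $\lambda_k>\bar{\lambda}_k$ FU $k$ stays silent regardless of $\mathbf{p}_{-k}$. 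Since raising $\lambda_k$ past $\bar{\lambda}_k$ leaves $p^*_k=0$ and hence reproduces the same follower configuration attainable at $\lambda_k=\bar{\lambda}_k$, the supremum of $\tilde{u}_0$ over $\pmb\lambda\succeq 0$ coincides with its supremum over the compact box $\mathcal{B}=\prod_{k\in\mathcal{K}}[0,\bar{\lambda}_k]$.

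On $\mathcal{B}$ I would then invoke a Weierstrass argument. By Lemma~\ref{le_single_value} each best response is a single-valued continuous map, so Berge's maximum theorem makes the unique fixed point $\mathbf{p}^*(\pmb\lambda)$ depend continuously on $\pmb\lambda$; equivalently, the graph $\Gamma=\{(\pmb\lambda,\mathbf{p}^*(\pmb\lambda)):\pmb\lambda\in\mathcal{B}\}$ is a closed subset of the compact product $\mathcal{B}\times\prod_{k\in\mathcal{K}}[0,p^{\max}_k]$, because the defining inequalities (\ref{eq8}) are preserved under limits. Consequently $\tilde{u}_0$ is continuous on the compact set $\mathcal{B}$, so it attains a maximum at some $\hat{\pmb\lambda}\in\mathcal{B}$, which is finite by construction of $\mathcal{B}$. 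The pair $(\hat{\pmb\lambda},\mathbf{p}^*(\hat{\pmb\lambda}))$ is then the desired pure-strategy SE with finite price vector.

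The step I expect to be the main obstacle is the continuity of the induced revenue in $\pmb\lambda$, equivalently the closedness of $\Gamma$. The difficulty is that $\mathbf{p}^*(\pmb\lambda)$ is only implicitly defined through the coupled best responses of all FUs, with $\lambda_k$ entering $u_k$ transcendentally, so one cannot read off continuity from a closed form. I would handle this by leaning on the uniqueness from Theorem~\ref{Thm_SNE} to upgrade the upper-hemicontinuity of the NE correspondence (inherited from the continuity of the best responses in Lemma~\ref{le_single_value}) to genuine continuity of the single-valued NE map. The one point needing extra care is the boundary behavior at the thresholds $\bar{\lambda}_k$: I must check that $p^*_k$ approaches $0$ continuously rather than jumping, which follows from the strict quasiconcavity of $u_k$ in Lemma~\ref{le_single_value}, as the interior maximizer slides continuously down to the endpoint $p_k=0$ as $\lambda_k\uparrow\bar{\lambda}_k$.
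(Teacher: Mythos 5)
Your proof is correct in substance but takes a genuinely different route from the paper's. The paper treats the leader and the followers symmetrically and invokes a Debreu--Glicksberg--Fan-type existence result (Theorem 3.2 of its cited reference): each payoff is quasiconcave in the player's own strategy ($u_k$ by Lemma~\ref{le_single_value}, $u_0$ because it is affine in $\pmb\lambda$), the strategy sets are convex and compact, so a pure-strategy equilibrium in the sense of Definition~\ref{def_se} exists; finiteness of $\hat\lambda_k$ is then read off from the first-order condition (\ref{eq_thm_se_1}), whose right-hand side is bounded over $\tilde p_k\in[0,p_k^{\max}]$. You instead do genuine backward induction: uniqueness of the follower NE (Theorem~\ref{Thm_SNE}) gives a single-valued reaction map $\mathbf{p}^*(\pmb\lambda)$, drop-out thresholds compactify the price space, uniqueness plus the closed graph of the NE correspondence upgrades upper hemicontinuity to continuity of $\mathbf{p}^*(\cdot)$, and Weierstrass finishes on the induced revenue. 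Your route is longer but buys more: it produces the price that is optimal against the followers' \emph{reaction}, which is the object the paper's asymptote-based approximation (\ref{eq15}) is actually chasing, whereas the paper's argument quietly assumes $\pmb\Lambda$ is compact and only delivers a simultaneous-move equilibrium. Two points to tighten. First, the drop-out threshold cannot be obtained by comparing the maximum of $\psi$ against $\lambda_k h_{k,0}p_k$, since both vanish as $p_k\to 0$; instead use $\log(1+x)\le x$ to get $u_k\le p_k\left(WG_k/p_a-\lambda_k h_{k,0}\right)$ with $G_k$ as in (\ref{eq_thm2_2}) bounded uniformly in $p_{-k}$, which yields an explicit finite $\bar\lambda_k$ forcing $\hat p_k=0$. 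Second, your $\hat{\pmb\lambda}$ satisfies $u_0(\hat{\pmb\lambda},\mathbf{p}^*(\hat{\pmb\lambda}))\ge u_0(\pmb\lambda,\mathbf{p}^*(\pmb\lambda))$ rather than the literal inequality (\ref{eq7}), which holds $\mathbf{p}^*$ fixed; since $u_0$ is strictly increasing in each $\lambda_k$ for fixed $\mathbf{p}$, (\ref{eq7}) cannot hold at a finite interior price when some $p^*_k>0$, so your (standard Stackelberg) reading is the defensible one, but you should state explicitly that this is the equilibrium notion you are establishing.
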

\begin{proof}
The proof is based on Theorem 3.2 of \cite{han2012game}. Given the condition that each local strategy in the game is compact and convex and 
the corresponding payoff function is quasiconcave, the existence of a pure-strategy SE is guaranteed. For the FUs, quasiconcavity of 
$u_k(\mathbf{p},\pmb\lambda)$ in $p_i$ is given by Lemma \ref{le_single_value}. For the MBS, $u_0$ is an affine function of $\pmb\lambda$, 
hence being quasiconcave in $\pmb\lambda$. It is trivial that $\mathbf{p}$ and $\pmb\lambda$ are convex and compact, then based on Theorem 
3.2 of \cite{han2012game}, there exists a pure-strategy NE in the game. Beyond the discussion after (\ref{eq13}) on the fact that 
$0\le\lambda_k<\infty$, we can derive the relationship of the BRs between the FUs and the MBS from $\frac{\partial u_k}{\partial p_k}=0$ as:
\begin{equation}
\label{eq_thm_se_1} 
\tilde{\lambda}_k=\frac{W\tilde{G}_k}{h_k (1+\tilde{\gamma}_k)(\tilde{p}_k+p_a)}-\frac{W\log(1+\tilde{\gamma}_k)}{h_k(\tilde{p}_k+p_a)^2} ,
\end{equation}
in which $\tilde{p}_k$ and $\tilde{\gamma}_k$ are the local BR and the corresponding SINR. $\tilde{G}_k$ is given by (\ref{eq_thm2_2}) in 
Appendix A. For (\ref{eq_thm_se_1}), the value of the right-hand side expression is upper-bounded since $0\le\tilde{p}_k\le{p}^{\max}_k$. 
Then $\tilde{\lambda}_k$ is finite. \end{proof}

Our approximate solution to the leader subgame is inspired by the pioneering work of \cite{1337220}, which models the asymptotic behaviors of 
the equilibriu bhum power vector and the corresponding payments. We assume that each FU's behavior can be 
asymptotically modeled by two regions, the price-insensitive region and the price-sensitive region. In the price-insensitive region, the FU's 
behaviors are hardly influenced by the price. In the price-sensitive region, the local power allocation $p_k$ is driven toward 0. 
Mathematically, the two-region model can be expressed by the following asymptotes:
\begin{itemize}
 \item Low-price asymptote as $\lambda_k\rightarrow0$:
 \begin{equation}
  \label{eq12}
  \left\{
  \begin{array}{ll}
    \gamma^*_k\approx\hat{\gamma}_k, \\
    r_k(\lambda_k, p_k)=h_{k,0}\hat{p}_k\lambda_k\propto\lambda_k.
   \end{array}\right. 
 \end{equation}
 \item High-price asymptote as $\lambda_k\rightarrow\infty$:
  \begin{equation}
  \label{eq14}
    p_k\!\approx\!\frac{W}{\lambda_k({N_k+h_{0,k}p_0})}-p_a. 
 \end{equation}
\end{itemize}
In (\ref{eq12}), $\tilde{\gamma}_k$ is the equilibrium SINR when $\lambda_k\!=\!0$. The details for deriving (\ref{eq12}) and (\ref{eq14}) is 
presented in Appendix D. Since in the low-price asymptote, $r_k$ increases with $\lambda_k$ and in the high-price asymptote it decreases with
$\lambda_k$, the maximum payment must happen between the two regions. Then, we can extend the two FU payment asymptotes toward each other 
until they meet at the intersection price $\lambda^a_k$. With such an approximation, $r_k(\lambda^a_k,p_k)$ will be the maximum payment 
received from FU $k$. Combining (\ref{eq12}) and (\ref{eq14}), we can obtain the intersection point for the two asymptotes as:
  \begin{equation}
  \label{eq15}
   \lambda^a_k\approx\frac{W}{({p}^*_k+p_a)(N+h_{0,k}p_0)},
 \end{equation}
in which ${p}^*_k$ is the power allocation corresponding to the SINR $\gamma^*$ in (\ref{eq12}). It can be obtained by setting $\pmb\lambda=0$
and solving (\ref{eq_prop_abm_1}) with the BR-based asynchronous strategy-updating mechanism.

\subsection{Femtocell Power Allocation in Discrete Strategies}
We continue to extend the analysis of the game to the scenario in which the FUs choose their strategies from a finite, discrete set of powers. 
In this case, the conditions for NEs (i.e., Theorems \ref{Thm_SNE} and \ref{Them_SE}) are not satisfied anymore. Therefore, the properties 
of the NE need to be re-evaluated. Within the same game structure of (\ref{eq5}) and (\ref{eq6}), we denote the action set of the FUs 
by $\mathcal{P}_k\!=\!\{p_k^1\!=\!0,\ldots,p_k^{|\mathcal{P}_k|}\}$. It is well known that every finite non-cooperative game has a 
mixed-strategy NE \cite{han2012game}. For the follower subgame, we define the mixed-strategies of FU $k$ as $\pmb\pi_k\!=\![\pi^1_k, \ldots, 
\pi^{|\mathcal{P}|}_k]$, in which $\pi_k^j(p_k^j)\!=\!\Pr(p_k\!=\!p_k^j)$ is the probability for FU $k$ to choose the $j$-th action 
$p^j_k\in{\mathcal{P}_k}$. Then, given any MBS price $\pmb\lambda$, there will be at least one mixed-strategy NE for the FUs. Different from 
the continuous game, the expected net payoff of FU $k$ becomes
\begin{equation}
  \label{eq16}
   u_k(\pmb\pi_k,\pmb\pi_{-k},\pmb\lambda)\!=\!\sum_{\mathbf{p}\in\mathcal{P}}(\psi_k(p_k,p_{-k})\!-\!\lambda_kh_{k,0}p_k)
   \prod_{i\in\mathcal{K}}\prod_{1\le{j}\le|\mathcal{P}_k|}
   \pi^j_i,
\end{equation}
in which $\sum_{j}\pi_k^j\!=\!1, 0\!\le\!\pi_k^j\!\le\!1$. Similarly, the expected revenue of the MBS becomes
\begin{equation}
  \label{eq17}
   u_0(\pmb\lambda,\pmb\pi)=\sum_{k\in{\mathcal{K}}}\lambda_k\sum_{1\le{j}\le|\mathcal{P}_k|}\pi^j_kh_{k,0}p^j_k.
\end{equation}

Due to the limit of information exchange, FU $k$ can only attain its local payoff $u_k(p^j_k,\pmb\pi_{-k})$ each time when it chooses $p_k^j$ and 
the other FUs adopt the mixed strategies $\pmb\pi_{-k}$. To ensure that each FU is able to learn its Nash distribution $\pmb\pi_k$, we adopt the 
Logit best response function (namely, the smooth BR based on entropy perturbation) \cite{leslie2003convergent}:
\begin{equation}
 \label{eq18}
 \beta^t_k(p^j_k|\pmb\pi_{-i})\!=\!\frac{\exp(U^{t\!-\!1}_k(p^j_k,\pmb\pi_{-k})/\tau)}{\sum_{1\le{i}\le|\mathcal{P}_k|}\exp(U^{t\!-\!1}_k(p_k^i,
 \pmb\pi_{-k})/
 \tau)},
\end{equation}
in which $U^{t}_k$ is the estimated expected payoff at time $t$. $\tau$ is a positive scalar (also known as Boltzmann temperature) that controls 
the sensitivity of the BR to perturbation. 

Based on the two-timescale strategy learning scheme \cite{leslie2003convergent}, we introduce two coupled stochastic learning processes to 
approximate $U_k(p^j_k)$ and $\pi_k(p^j_k)$ in (\ref{eq18}) as follows:
\begin{align}
 \label{eq19}
 &{U}^t_k(p^j_k)\!=\!{U}^{t-1}_k(p^j_k)\!+\!\alpha_1^t\mathbbm{1}_{(\pi^j_k(p^j_k))}\left({u^t_k(p^j_k)\!-\!{U}^{t-1}_k(p^j_k)}\right),\\ 
 \label{eq20}
 &\pi^t_k(p^j_k)\!=\!\pi^{t-1}_k(p^j_k)\!+\!\alpha_2^t\left(\beta^t_k(p^j_k|\pmb\pi_{-i})-\pi^{t-1}_k(p^j_k)\right).
\end{align}
In (\ref{eq19}) and (\ref{eq20}), $u^t_k(p^j_k)$ is the instant payoff observation at the FAP in (\ref{eq4}) and $\beta^t_k(p^j_k|
\pmb\pi_{-i})$ is the smooth BR (\ref{eq18}). $\mathbbm{1}_{(\pi^j_k(p^j_k))}$ is the indicator function. $\mathbbm{1}_{(\pi^j_k(p^j_k))}=1$ 
if $\pi^j_k(p^j_k)=1$ and otherwise $\mathbbm{1}_{(\pi^j_k(p^j_k))}=0$. The parameter sequence $\alpha_1^t$ and $\alpha_2^t$ satisfy the 
following conditions:
\begin{eqnarray}
 \label{eq21}
 \left\{
 \begin{array}{ll}
  \lim\limits_{T\rightarrow0}\sum\limits^T_{t=1}\alpha_1^t=+\infty, \qquad \lim\limits_{T\rightarrow0}\sum\limits^T_{t=1}(\alpha_1^t)^2<+\infty,\\
  \lim\limits_{T\rightarrow0}\sum\limits^T_{t=1}\alpha_2^t=+\infty, \qquad \lim\limits_{T\rightarrow0}\sum\limits^T_{t=1}(\alpha_2^t)^2<+\infty,\\
  \lim\limits_{t\rightarrow0}\frac{\alpha_2^t}{\alpha_1^t}=0.
 \end{array}\right.
\end{eqnarray}
The conditions in (\ref{eq21}) ensures that the learning of strategies changes on a slower timescale than that of the action values. Based on 
the discussion in \cite{ECTA:ECTA376}, we can show that the learning processes converges by Theorem \ref{Thm_Converge}:
\begin{theorem}
 \label{Thm_Converge}
 With any arbitrary $\pmb\pi^0$ and $U_k^0$, the strategy-learning mechanism (\ref{eq18})-(\ref{eq20}) almost surely converges to some fixed 
 point. The probability of converging to a NE is non-zero.
\end{theorem}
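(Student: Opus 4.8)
The plan is to treat the coupled recursions (\ref{eq19})--(\ref{eq20}) as a two-timescale stochastic approximation in the sense of \cite{leslie2003convergent} and to reduce the almost-sure convergence to the asymptotics of an associated mean-field ODE. First I would write each recursion in the canonical Robbins--Monro form $x^{t}=x^{t-1}+\alpha^{t}(h(x^{t-1})+M^{t})$, separating a deterministic drift from a martingale-difference noise $M^t$ adapted to the natural filtration. In (\ref{eq19}) the drift of $U^t_k(p^j_k)$ points toward the true expected payoff $u_k(p^j_k,\pmb\pi_{-k})$ of the current profile, and the noise is the zero-mean observation error $u^t_k(p^j_k)-u_k(p^j_k,\pmb\pi_{-k})$; in (\ref{eq20}) the drift is $\beta^t_k(p^j_k\mid\pmb\pi_{-k})-\pi^{t-1}_k(p^j_k)$. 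Because every iterate lives on the compact product of simplices $\prod_{k}\Delta(\mathcal{P}_k)$, the sequences stay bounded and the conditional noise variance is bounded, so the hypotheses of the ODE method are met.

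Second, I would exploit the timescale separation $\lim_{t}\alpha_2^t/\alpha_1^t=0$ from (\ref{eq21}). On the fast timescale $\pmb\pi$ is quasi-static, so the value recursion tracks the linear, globally asymptotically stable ODE $\dot U_k(p^j_k)=u_k(p^j_k,\pmb\pi_{-k})-U_k(p^j_k)$ whose unique rest point is the exact expected-payoff vector $U_k^{\star}=u_k(\cdot,\pmb\pi_{-k})$. Substituting this equilibrated value into the Logit map (\ref{eq18}) shows that on the slow timescale $\pmb\pi^t$ shadows the smooth (perturbed) best-response dynamics $\dot{\pmb\pi}_k=\beta_k(\pmb\pi_{-k})-\pmb\pi_k$, i.e.\ the continuous-time stochastic fictitious play analysed in \cite{ECTA:ECTA376}. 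By the ODE method the interpolated process $\pmb\pi^t$ then converges almost surely to an internally chain-transitive set of this flow.

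Third, I would show that this limit set reduces to isolated rest points, namely the Logit (quantal-response) equilibria. The leverage here is the supermodular structure of Theorem \ref{Thm_SG}, inherited by the mixed extension: for supermodular games the smooth best-response dynamics are globally convergent \cite{ECTA:ECTA376}, ruling out limit cycles and chain-recurrent behaviour and forcing convergence to a single rest point. This yields the first assertion. \emph{I expect this step---excluding limit cycles for the perturbed dynamics and transporting the monotonicity of Theorem \ref{Thm_SG} to the Logit flow---to be the main obstacle}, since everything else is a routine verification of the stochastic-approximation hypotheses.

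Finally, for the positive-probability claim I would combine two ingredients. The rest points split into linearly stable and linearly unstable ones; the full-support nature of the Logit map keeps $\pmb\pi^t$ in the relative interior of each simplex and supplies persistent excitation, so by the classical non-convergence results for stochastic approximation the process avoids the unstable rest points almost surely, whereas each stable rest point, having a basin of attraction with nonempty interior, is reached with strictly positive probability. Since the Boltzmann temperature $\tau$ governs the perturbation and the Logit equilibria lie within $O(\tau)$ of the Nash set, a stable rest point associated with (the perturbation of) a Nash equilibrium exists and its basin carries positive probability, which establishes that a NE is reached with non-zero probability.
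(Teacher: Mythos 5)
Your overall architecture is the same as the paper's: cast (\ref{eq19})--(\ref{eq20}) as a two-timescale stochastic approximation in the sense of \cite{leslie2003convergent}, conclude that $\pmb\pi^t$ is almost surely an asymptotic pseudotrajectory of the smooth best-response flow $\dot{\pmb\pi}_k=\beta_k(\pmb\pi_{-k})-\pmb\pi_k$, and then invoke the supermodularity of the follower game together with the Hofbauer--Sandholm results \cite{ECTA:ECTA376} on stochastic fictitious play. Two discrepancies are worth flagging. First, a small step you skip: the payoff (\ref{eq16}) contains the pricing term $-\lambda_k h_{k,0}p_k$, and the paper strips it off via an affine-transformation lemma from \cite{ECTA:ECTA376} (convergence is preserved under $\tilde{u}_k=\kappa_k u_k+\vartheta(s_k)$ with $\vartheta$ depending only on player $k$'s own action) before verifying supermodularity of the residual game $\tilde{u}_k=\sum_{\mathbf{p}}\psi_k\prod\pi^j_i$. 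This is minor, since a term linear in one's own action does not affect increasing differences, but it is the formal route the paper takes.

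Second, and more substantively, the step you correctly identify as the main obstacle is resolved in the paper by citing the supermodular-game theorem of \cite{ECTA:ECTA376}, but that theorem is weaker than what you assert. It does \emph{not} give global convergence to a single rest point: it states that with probability one the $\omega$-limit set is contained either in the set of rest points $RP$ or in a finite Lipschitz submanifold $M_i\cap[\underline{\pmb\pi}_k,\overline{\pmb\pi}_k]$ to which persistent non-convergent trajectories are asymptotic. So your claim that supermodularity ``rules out limit cycles and chain-recurrent behaviour and forces convergence to a single rest point'' overstates the available machinery; indeed the paper itself hedges (the theorem only asserts a \emph{non-zero} probability of reaching a NE, and the simulation section explicitly compensates ``for the divergence caused by the self-learning algorithm''). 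Relatedly, your final paragraph conflates the rest points of the Logit flow (quantal-response equilibria, within $O(\tau)$ of the Nash set) with Nash equilibria themselves; being in the basin of a stable Logit rest point does not establish convergence \emph{to a NE}. To be fair, the paper's own proof does not really establish the second sentence of the theorem either, so on that point your attempt, while not airtight, goes further than the source.
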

\begin{proof}
 See Appendix E.
\end{proof}

In the scenario of finite strategies, it is even more difficult to obtain a SE point for the MBS prices. However, by investigating the property 
of concavity in the payoff function for any element $\pi^j_k$ or $\lambda_k$ of the joint strategy vector $(\pmb\pi,\pmb\lambda)$, we can show 
that there exists at least one SE with MBS price in pure-strategy: 
\begin{theorem}
 \label{Thm_FSE}
 With the discrete set of $\mathcal{P}_k$, the Stackelberg game with the payoff functions (\ref{eq16}) and (\ref{eq17}) has a SE composed of 
 the mixed-strategy power allocation $\hat{\pmb\pi}$ and the pure-strategy price $\hat{\pmb\lambda}$, which is finite in each 
 $\hat{\lambda}_k$. 
\end{theorem}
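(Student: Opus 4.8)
The plan is to mirror the existence argument of Theorem \ref{Them_SE}, replacing the quasiconcavity of the continuous-power payoff (which there came from Lemma \ref{le_single_value}) by the linearity of the expected payoffs in the mixed-strategy and price variables, and then to invoke the same fixed-point result (Theorem 3.2 of \cite{han2012game}). Concretely, I regard $(\pmb\pi,\pmb\lambda)$ as the joint action of the combined game in which FU $k$ controls its distribution $\pmb\pi_k$ over $\mathcal{P}_k$ and the MBS controls $\pmb\lambda$; an equilibrium of this combined game together with the leader--follower hierarchy yields the claimed SE.

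First I would compactify the price space. Since the energy efficiency $\psi_k$ is bounded above by a finite constant $\bar{\psi}_k$ over the strictly positive power levels of $\mathcal{P}_k$ (regardless of the opponent profile, because $\gamma_k\le h_{k,k}p^j_k/N_k$), for each $k$ there is a finite threshold
\begin{equation}
 \lambda^{\max}_k=\frac{\bar{\psi}_k}{h_{k,0}\,p^2_k}, \nonumber
\end{equation}
with $p^2_k$ the smallest positive power, such that whenever $\lambda_k>\lambda^{\max}_k$ the net payoff $\psi_k(p^j_k,p_{-k})-\lambda_kh_{k,0}p^j_k$ is negative for every $j\ge2$ and every opponent profile, so that $p^1_k=0$ becomes the strictly dominant action for FU $k$. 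Once FU $k$ places unit mass on $p^1_k=0$, the revenue term $\lambda_k\sum_j\pi^j_kh_{k,0}p^j_k$ in (\ref{eq17}) vanishes, so the MBS gains nothing from $\lambda_k>\lambda^{\max}_k$. Hence no generality is lost by restricting the leader to the compact, convex box $\pmb\Lambda=\prod_{k\in\mathcal{K}}[0,\lambda^{\max}_k]$; this simultaneously proves that each equilibrium price $\hat{\lambda}_k$ is finite.

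Next I would verify the hypotheses of Theorem 3.2 of \cite{han2012game} for the combined game. Each follower's mixed-strategy set $\Delta_k$ is the probability simplex over $\mathcal{P}_k$, which is compact and convex, and $\pmb\Lambda$ is compact and convex by the truncation above. The expected follower payoff (\ref{eq16}) is multilinear in the $\pi$'s, hence affine---and therefore quasiconcave---in FU $k$'s own distribution $\pmb\pi_k$ for fixed $(\pmb\pi_{-k},\pmb\lambda)$; the MBS revenue (\ref{eq17}) is affine, hence quasiconcave, in $\pmb\lambda$ for fixed $\pmb\pi$; and both payoffs are continuous (indeed polynomial) in the joint strategy. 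These are exactly the conditions under which Theorem 3.2 of \cite{han2012game} delivers a \emph{pure-strategy} equilibrium in the product space $\pmb\Lambda\times\prod_{k\in\mathcal{K}}\Delta_k$. In that equilibrium the MBS action $\hat{\pmb\lambda}$ is a single point of $\pmb\Lambda$, i.e.\ a deterministic (pure-strategy) finite price, while each $\hat{\pmb\pi}_k$ is a point of the simplex, i.e.\ a mixed strategy over powers. By the back-induction structure, $\hat{\pmb\pi}$ is a mixed NE of the follower subgame given $\hat{\pmb\lambda}$ and $\hat{\pmb\lambda}$ maximizes $u_0$ against $\hat{\pmb\pi}$, so $(\hat{\pmb\pi},\hat{\pmb\lambda})$ is the desired SE.

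The main obstacle I anticipate is the price-truncation step: one must argue \emph{uniformly over all opponent profiles} $\pmb\pi_{-k}$ that $p^1_k=0$ becomes dominant once $\lambda_k$ exceeds $\lambda^{\max}_k$, and then confirm that the associated revenue collapses to zero, so that discarding prices above the threshold genuinely costs the leader nothing. The remaining ingredients---quasiconcavity and continuity---are immediate from the multilinear form of (\ref{eq16}) and (\ref{eq17}), in contrast to the continuous-power case where establishing quasiconcavity required the separate analysis of Lemma \ref{le_single_value}.
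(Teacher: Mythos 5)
Your proposal is correct, and the existence half follows essentially the same route as the paper: both arguments observe that (\ref{eq16}) is linear in $\pmb\pi_k$ and (\ref{eq17}) is linear in $\pmb\lambda$, and both invoke a concave-game fixed-point theorem on the product of compact convex strategy sets (you cite Theorem 3.2 of \cite{han2012game}, the paper cites Theorem 1 of \cite{1965Rosen} together with Kakutani; these are interchangeable here). Where you genuinely diverge is the finiteness of $\hat{\lambda}_k$. The paper argues a posteriori: it writes the interior first-order conditions $\partial u_k/\partial\pi^j_k=0$ as (\ref{eq_pr_thm_fse1}), substitutes into (\ref{eq17}) to get (\ref{eq_pr_thm_fse2}), and concludes by contradiction from the fact that $u_0\rightarrow0$ as $\lambda_k\rightarrow\infty$. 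That argument implicitly assumes a fully mixed (interior) equilibrium, since $\partial u_k/\partial\pi^j_k=0$ need not hold for actions played with probability zero. Your a priori truncation via the strict-dominance threshold $\lambda^{\max}_k=\bar{\psi}_k/(h_{k,0}p^2_k)$ avoids this, is uniform over opponent profiles (the bound $\gamma_k\le h_{k,k}p^j_k/N_k$ does the work), and has the added benefit of supplying the compactness of $\pmb\Lambda$ that the fixed-point theorem actually requires --- a point the paper glosses over. One caveat you share with the paper: the claim that the MBS ``gains nothing'' from $\lambda_k>\lambda^{\max}_k$ requires the leader to anticipate the followers' re-optimized response (so that the revenue collapses to zero), which is the intended Stackelberg semantics but is not literally what the leader condition (\ref{eq7}) in Definition \ref{def_se} states; under the literal reading with $\mathbf{p}^*$ held fixed, $u_0$ is increasing in $\lambda_k$ and no finite maximizer over an unbounded $\pmb\Lambda$ would exist. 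Making the anticipatory reading explicit would close that loose end in both proofs.
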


\begin{proof}
The proof for the existence of the SE follows directly from Theorem 1 of \cite{1965Rosen}. It is easy to see that the payoff function 
$U_k(\pmb\pi,\pmb\lambda)$ is linear (hence concave) in $\pi^i_k$ if $\pmb\lambda$ and the rest of the elements of $\pmb\pi$ are fixed. 
Similarly, $U_0(\pmb\pi,\pmb\lambda)$ is linear in $\lambda_k$ with $\lambda_{-k}$ and $\pmb\pi$ fixed. Therefore, following the discussion 
of Theorem 1 in \cite{1965Rosen} and the Kakutani fixed point theorem, there exists a fixed point $(\pmb\pi^*,\pmb\lambda^*)$ satisfying 
Definition \ref{def_se}.
 
The proof of a finite $\lambda_k$ in the SE is similar to that in Theorem \ref{Them_SE}. If we assume that $(\pmb\lambda, \pmb\pi)$ is a SE, 
then from (\ref{eq16}) we obtain $\forall k\in\mathcal{K}, 1\!\le\!j\!\le\!|\mathcal{P}_k|$,$\frac{\partial u_k}{\partial \pi^j_k}\!=\!0$, which
is equivalent to
\begin{equation}
 \label{eq_pr_thm_fse1}
 \sum_{p_{-k}}\psi(p^j_k,p_{-k})\prod_{m\ne k,i}\pi_m^i=\lambda_kh_{k,0}p_k^j.
\end{equation}
Similar to the continuous-game scenario in Theorem \ref{Them_SE}, it is easy to verify that as $\lambda_k\!\rightarrow\!\infty$, 
$\pi^1_k\!\rightarrow\!1$ and $u_0\!\rightarrow\!0$. We note that $\forall \pmb\lambda$, the equation array (\ref{eq_pr_thm_fse1}) always has a solution 
since the mixed-strategy NE exists. With (\ref{eq17}) and (\ref{eq_pr_thm_fse1}), we can obtain
\begin{equation}
 \label{eq_pr_thm_fse2}
 u_0=\sum_k\sum_{p_k^j\in{\mathcal{P}_k}}\frac{\psi(p_k^j,p_{-k})\pi_k^j\prod_{m\ne k,i}\pi_m^i}{h_{k,0}p_k^j},
\end{equation}
which must have a non-zero maximum value. Therefore, we can always find a finite $\lambda_k$ that maximize (\ref{eq17}) with the NE of the 
follower game, Otherwise, it will contradict with the fact that $u_0\rightarrow0$ as $\lambda_k\rightarrow\infty$.
\end{proof}

Following our discussion, we propose a pricing mechanism based on the myopic best response as Algorithm 2:
\begin{algorithm}[htb]
 \label{alg2}
 \caption{Heuristic price updating}
 \begin{algorithmic}[1]
 \REQUIRE  
 The MBS sets $\lambda_k\!=\!0$ and the FUs arbitrarily initialize $\pmb\pi_k^0$.
 \WHILE{the cross-tier SINR requirement (\ref{eq1}) is not met}
  \WHILE{not converged}
    \STATE $\forall k\in\mathcal{K}$, FU $k$ updates $\pmb\pi_k$ with (\ref{eq18})-(\ref{eq20}).
  \ENDWHILE
  \STATE $\forall k\in\mathcal{K}$, FU $k$ report $\pmb\pi_k$ to the MBS.
  \STATE The MBS announces the prices $\lambda_k$:
  \begin{equation}
  \lambda_k=\frac{\sum_{\mathbf{p}\in\mathcal{P}}\psi_k(p_k,p_{-k})\prod_{i\in\mathcal{K},j}\pi^j_i}{\sum_jh_{k,0}p^j_k\pi^j_k},
  \end{equation}
 \ENDWHILE
 \end{algorithmic}
\end{algorithm}

\section{Simulation Results}
The objective of this section is to provide insight into the impact of pricing on the network performance at the equilibrium, and the 
influence of strategy discretization on the learning process. In the simulation, we assume that the FAPs are randomly located indoors within 
a circle centered at the MBS with a radius of $300$m. Each FU is placed within a circle centered at the 
corresponding FAP with a radius of $15$m. The channel gains of the transmitter-receiver pairs are generated by a lognormal shadowing pathloss
model with $h_{i,j}=d^{-k}_{i,j}$, in which $k$ is the pathloss factor, $k=4$ for the FUs and $k=2.5$ for the MU. The parameters used
in the simulation are summarized in Table \ref{t1}.\vspace{-2mm}
\begin{table} [!ht]
  \caption{Main Parameters Used in the Femtocell Network Simulation}
  \label{t1}
  \centering
 \begin{tabular}{ l|r}
  \hline
  Parameter &  Value \\
  \hline
  Shared Bandwidth $W$ & 1MHz\\
  \hline
  Maximum MU transmit power ${p}^{\max}_0$ & $27$dBm\\
  \hline
  Feasible region for FU transmit power $[{p}^{\min}_k, {p}^{\max}_k]$ & $[0, 20]$dBm\\
  \hline
  Additional FU circuit power $p_a$ & $3$dBm\\
  \hline
  AWGN power $N_k, k=0,\ldots,K$ & $-40$dBm\\
  \hline
  SINR threshold of the MU & 3dB\\
  \hline
\end{tabular}
\end{table}\vspace{-2mm}

\subsection{Analysis of the Equilibrium in the Continuous Game}
In the first simulation, we study the influence of the MBS price $\pmb\lambda$ on the equilibrium of the follower subgame with 6 FAPs. 
For the convenience of demonstration, we suppose that the MBS charges an identical price to each FU-FAP link. Figure 
\ref{fig1} provides the payoffs evolution of both the MBS and the FU-FAP links at the follower-game NE as the uniform price increases. 
We note in Figure 
\ref{fig1}.b that there exists an optimal value of $\pmb\lambda$ to maximize the MBS revenue, which provides an experimental evidence of 
Theorem \ref{Them_SE}. We also note from Figure \ref{fig1}.a and \ref{fig1}.b that there exists a plateau region in which the average power 
efficiency remains almost the same while the MBS revenue keeps increasing. It means that without undermining the social welfare of the 
femtocells, the MBS is able to control the cross-tier interference by choosing the price within the plateau region.
\begin{figure}[!t]
\centering
\includegraphics[width=0.44\textwidth]{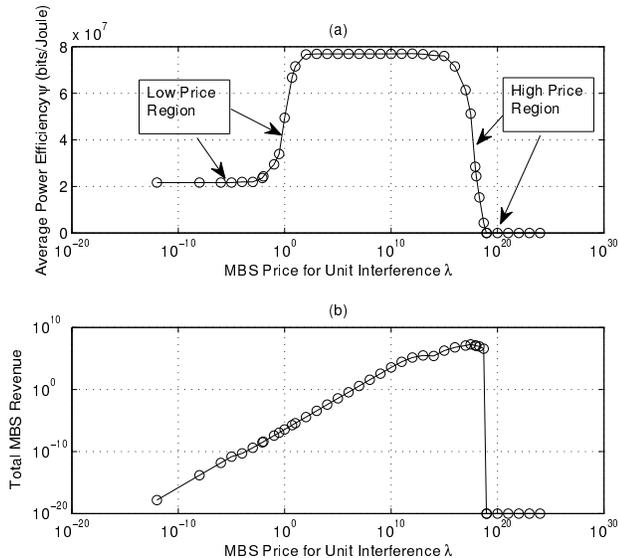}
\caption{Influence of the unit interference price on the NE of the continuous follower-game. (a) Average power efficiency at the NE vs. the 
unit price $\lambda$; (b) total revenue collected by the MBS at the NE vs. the unit price $\lambda$.}
\label{fig1}
\end{figure}

We note from Figure \ref{fig1} that at the optimal MBS price (i.e., the SE point), the FU performance dramatically 
degrades from the optimal condition. This leaves the room in practice for the MBS to trade a portion of the revenue for a socially better
performance. Following the first simulation, we investigate the network performance at the proposed price (\ref{eq15}) as the number 
of FAPs varies. In Figures \ref{fig2} and \ref{fig3}, the user performance at the proposed price is compared to that at the accurate SE price, 
and that with no price ($\lambda\!=\!0$). The accurate SE price is obtained using a semi-exhaustive searching method with bisection, and the 
utilities are obtained from Monte Carlo simulation. Figure \ref{fig2} shows that at the proposed price a better FU performance can be achieved 
(Figure \ref{fig2}.b) at the cost of losing a significant portion of the MBS revenue (Figure \ref{fig2}.a). However, by measuring the 
expected SINR of the MU in Figure \ref{fig3}, we note that such trade-off is worthwhile since the performance deterioration of the MU is small 
when compared to the gain of the FU performance. Again, Figure \ref{fig2}.b and Figure \ref{fig3} shows the fact that with no externality, the
network performance can be heavily impaired (see curve ``No price, $\lambda\!=\!0$'').

\begin{figure}[!t]
\centering
\includegraphics[width=0.44\textwidth]{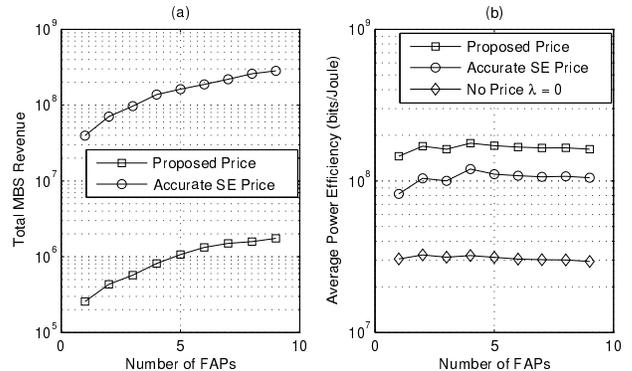}
\caption{MBS revenue and FU power efficiency at the SEs. (a) MBS revenue vs. the number of FAPs; (b) FU power efficiency vs. the
number of FAPs.}
\label{fig2}
\end{figure}

\begin{figure}[!t]
\centering
\includegraphics[width=0.43\textwidth]{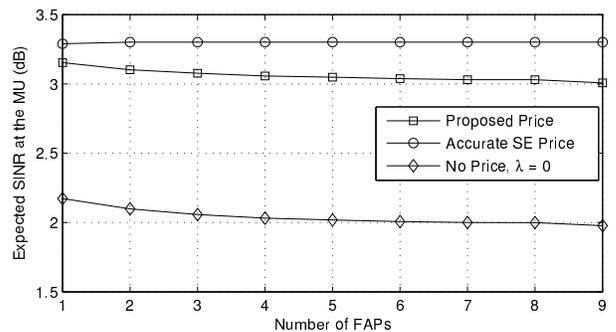}
\caption{Expected equilibrium SINR at the MU vs. the number of FAPs.}
\label{fig3}
\end{figure}

\subsection{Analysis of the Equilibrium in Discrete Game}
Since we are interested in the impact of strategy discretization on the network performance, we adopt the same network setup in the first 
simulation for the discrete game. The parameter for the self-learning algorithm is given in Table \ref{t2}. To compensate for the divergence 
caused by the self-learning algorithm (Theorem \ref{Thm_Converge}), we examine the expected user utilities with Monte Carlo simulation. 
For the convenience of demonstration, we also suppose that the MBS places an uniform price. The simulation results are shown in Figure 
\ref{fig4}.
\begin{table} [!ht]
  \caption{Main Parameters Used in the Self-Learning Algorithm}
  \label{t2}
  \centering
 \begin{tabular}{ l|r}
  \hline
  Parameter &  Value \\
  \hline
  Boltzmann temperature $\tau$ & 1\\
  \hline
  Learning rate for $U^t_k$ and $\pmb\pi^t_k$ & ${1}/{t}$ and ${1}/{t^2}$\\
  \hline
  Number of candidate power $M$ & $M=6$\\
  \hline
  Power sampling equation & $p^j_k=(1-\frac{j}{M}){p}^{\min}_k+\frac{j}{M}{p}^{\max}_k$\\
  \hline
\end{tabular}
\end{table}

Figure \ref{fig4}.b shows the existence of an optimal $\pmb\lambda$ that maximizes the MBS revenue. It can be interpreted as an experimental 
evidence of Theorem \ref{Thm_FSE}. Comparing Figure \ref{fig4}.a and Figure \ref{fig1}.a, we note that
in the discrete follower game (Figure \ref{fig4}.a), the ``plateau'' region extends to $\lambda\!\rightarrow\!0$. It means that different from 
the case of continuous game (Figure \ref{fig1}.a), lacking an external price does not severely undermine the social performance of the FUs. However, the femtocell network may suffer 
from discretization of the power space and only achieve approximately $1/2$ of the performance in the continuous game at most of the NEs.
\begin{figure}[!t]
\centering
\includegraphics[width=0.43\textwidth]{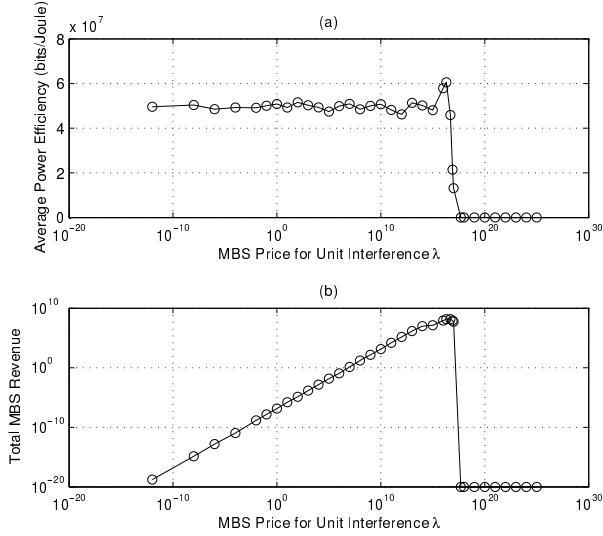}
\caption{Influence of the unit interference price on the NE of the discrete follower-game. (a) (Expected) average power efficiency at the NE 
vs. the unit price $\lambda$; (b) total revenue collected by the MBS at the NE vs. the unit price $\lambda$.}
\label{fig4}
\end{figure}

Figure \ref{fig5} shows the user performance at the proposed price with Algorithm 2, the accurate SE price and zero price, respectively. 
The comparison in Figure \ref{fig5} shows that the FU performance at the proposed price is the best of the three. However, from Figure 
\ref{fig5}.b we note that without a pricing scheme, the FUs can still achieve a good performance level. As the number of FAPs increases, we 
can observe a deterioration in the FU performance. It means that mixed-strategies NE in the discrete game is not as good as the 
continuous-game NE in maintaining the performance when the size of network increases (see Figure \ref{fig2}.b).

\begin{figure}[!t]
\centering
\includegraphics[width=0.44\textwidth]{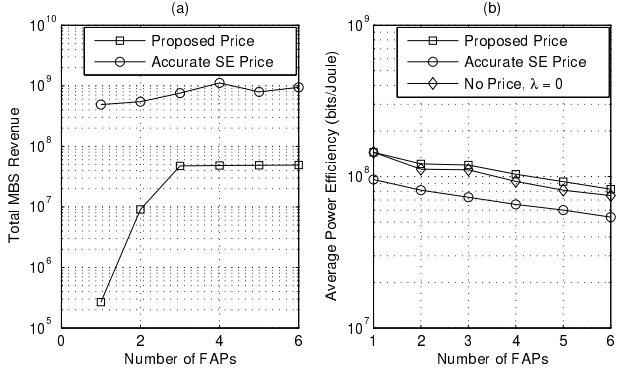}
\caption{Expected MBS revenue and FU power efficiency at the SEs in the discrete strategy space. (a) MBS revenue vs. the number of FAPs; (b) 
FU power efficiency vs. the number of FAPs.}
\label{fig5}
\end{figure}

Finally, we demonstrate in Figures \ref{fig6} and \ref{fig7} the convergence of the self-learning algorithm in the discrete game of 
$6$ FUs. Figure \ref{fig6} shows a snapshot of FUs' transmit-power evolution during the learning process of Algorithm 2. Figure \ref{fig7} 
shows the corresponding strategy evolution of FU 1. By running the simulation for multiple times, we observe that most of the learning 
processes converge within $600$ iterations.

\begin{figure}[!t]
\centering
\includegraphics[width=0.45\textwidth]{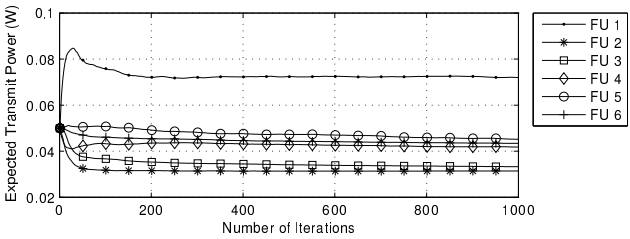}
\caption{Expected transmit power vs. the number of iterations.}
\label{fig6}
\end{figure}

\begin{figure}[!t]
\centering
\includegraphics[width=0.45\textwidth]{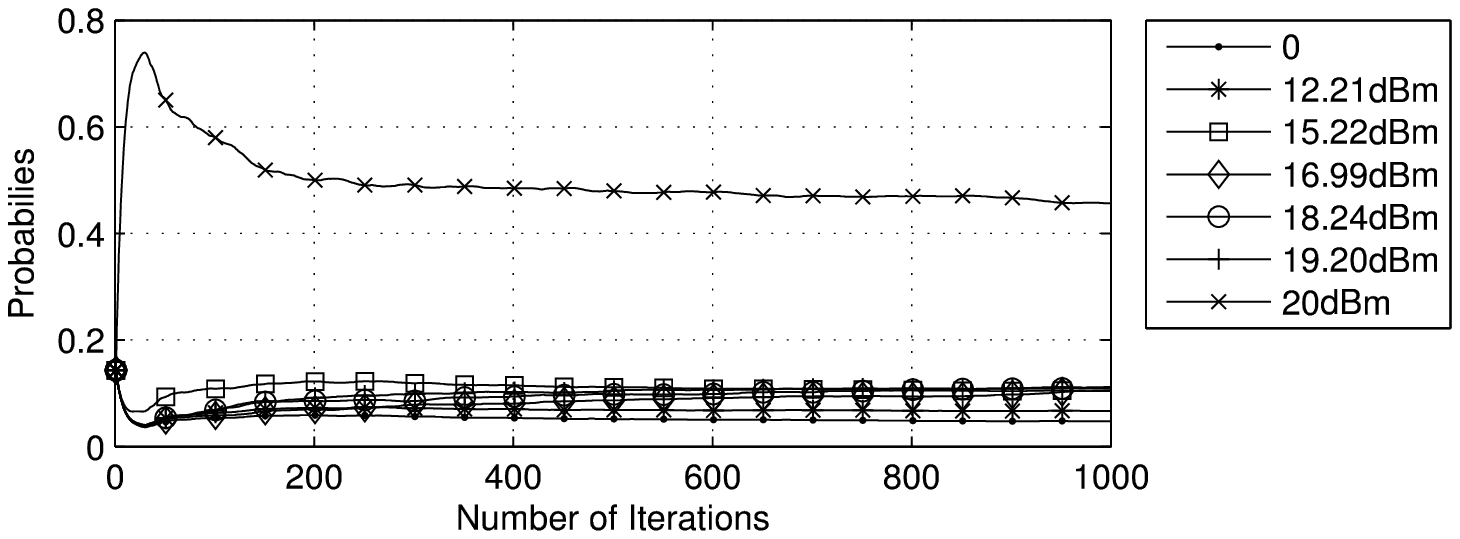}
\caption{Probability of power selection vs. the number of iterations by FU 1.}
\label{fig7}
\end{figure}

\section{Conclusion}
In this paper, we have formulated the price-based power allocation problem in the two-tier femtocell network under the framework 
of the Stackelberg game. We have provided the theoretical analysis of the properties of the equilibria in the scenarios of continuous FU 
power space and discrete power space, respectively. We have proposed two self-sufficient learning algorithms, one for each situation, for 
learning the NE of the follower game with limited information exchange. We have also provided the theoretical proof for the convergence of the 
learning algorithms. Our simulation results provides important insight in the different impact of the pricing mechanism on the network 
performance in the scenarios of the continuous game and the discrete game. In the simulation, we also show the efficiency of the proposed 
heuristic price-searching mechanism in the game. Our study provides an alternative way of designing the resource allocation protocols since
no intercell information exchange is required in the femtocells.

\appendices
\section{Proof of Theorem 1}
\begin{Lemma}[\!\cite{1198610}]
 \label{le_td}
 If a function $f(\mathbf{s})$ is twice differentiable, then supermodularity is equivalent to $\frac{\partial^2f(\mathbf{s})}{\partial{\mathbf{s}_i}
 \partial{\mathbf{s}_j}}\ge0$  $\forall \mathbf{s}_i, \mathbf{s}_j$, $j\ne i$.
\end{Lemma}

The first and the second conditions of a supermodular game in Definition \ref{def_sg} (trivially) holds for the proposed follower subgame 
(\ref{eq6}). Then, Theorem \ref{Thm_SG} can be derived based on Lemma \ref{le_td}. By taking the component-wise derivative of $u_k({p}_k, 
\mathbf{p}_{-k})$ with respect to $p_k$, we obtain:
\begin{equation}
 \label{eq_thm2_1}
 \frac{\partial u_k}{\partial p_k}\!=\!-\frac{W\log(1\!+\!\gamma_k)}{(p_k\!+\!p_a)^2}\!+\!\frac{WG_k}{(1\!+\!\gamma_k)(p_k\!+\!p_a)}\!
 -\!\lambda_kh_{k,0},
\end{equation}
in which
\begin{equation}
 \label{eq_thm2_2}
 G_k=\frac{h_{k,k}}{N_0+h_{0,k}p_0+\sum_{i\in{\mathcal{K}}\backslash\{k\}}h_{i,k}p_i}.
\end{equation}
Then $\forall k, j\in\mathcal{K}$, $k\ne j$, the value of $\frac{\partial^2 u_k}{\partial p_k\partial p_j}$ is given by:
\begin{align}
 \label{eq_thm2_3}
 \frac{{\partial}^2 u_k}{{\partial} p_k{\partial} p_j}=&\frac{WH_kp_k}{(1\!+\!\gamma_k)(p_k+p_a)^2}\!+
 \!\frac{WH_k\gamma_k}{(1\!+\!\gamma_k)^2(p_k+p_a)} \nonumber \\
 &-\!\frac{WH_k}{(1\!+\!\gamma_k)(p_k+p_a)}, 
\end{align}
in which 
\begin{equation}
 \label{eq_thm2_4}
 H_k=\frac{h_{j,k}h_{k,k}}{\left(N_0+h_{0,k}p_0+\sum_{i\in{\mathcal{K}}\backslash\{k\}}h_{i,k}p_i\right)^2}. 
\end{equation}
It is easy to verify from (\ref{eq_thm2_3}) that $\frac{{\partial}^2 u_k}{{\partial} p_k{\partial} p_j}\ge 0$ if $\gamma_k\ge{p_a}/{p_k}$. By
Lemma \ref{le_td}, $u_k$ has increasing difference between ${p}_k$ and any component of ${p}_{-k}$ if $\gamma_k\ge{p_a}/{p_k}$. By Definition
\ref{def_sg}, the proof of Theorem \ref{Thm_SG} is completed. 

\section{Proof of Lemma 1} 
The proof of Lemma \ref{le_single_value} is derived from investigating the quasiconcavity of the FU payoff functions. For conciseness, the 
readers are referred to Sections 3.1 and 3.4 of \cite{boyd2004convex} for the details of the definition in superlevel set and quasiconcavity.

\begin{Lemma}[\!\cite{boyd2004convex}]
  \label{le_global_op}
 Suppose $f: \mathcal{D}\rightarrow\mathbb{R}$ be strictly quasiconcave where $\mathcal{D}\subset\mathbb{R}^N$ is convex. Then any local 
 maximum of $f$ on $\mathcal{D}$ is also a global maximum of $f$ on $\mathcal{D}$. Moreover, the set $\arg\max\{f(x)|x\in\mathcal{D}\}$ is 
 either empty or a singleton.
\end{Lemma}

We first show that the utility function $u_k$ in $\mathcal{G}_f$ is quasiconcave. We examine the $\alpha$-superlevel set of 
$u_k(p_k, p_{-k})$ in ${p}_k$, which is equivalent to the 0-superlevel set of $f_{\alpha}(p_k,p_{-k})$:
\begin{equation}
  \label{eq_def_prov2}
  \begin{array}{ll}
  \mathcal{P}_{k, 0}\!=\!\big\{{p}_k\big\vert f_{\alpha}(p_k, p_{-k})\ge 0, 0\le{p}_k\le{{p}^{\max}_{k}},\\
  f_{\alpha}(p_k, p_{-k})\!=\!W\log(1\!+\!\gamma_k)\!-\!(\lambda_kh_{k,0}p_k+\alpha)(p_k+p_a) \big\}.
  \end{array}
\end{equation}
We note that $f_{\alpha}(p_k, p_{-k})$ is a concave function, so $\mathcal{P}_{k,\alpha}$ is convex by the definition of convexity. 
By the definition of quasiconcavity, $u_i(\pmb\lambda^*,{p}_k,{p}_{-k})$ is quasiconcave in ${p}_k$. 

Then we show that $u_i({p}_k,{p}_{-k})$ is strictly quasiconcave in ${p}_k$ so the BR is a global maximum and thus single-valued. Without 
loss of generality, we consider the power allocation $\hat{p}_k\!\in\![0, {p}^{\max}_k]$ with $u_k(\hat{p}_k, p_{-k})\!=\!\alpha$. We assume 
that a different power allocation $\tilde{p}_k$ satisfies $u_k(\tilde{p}_k, p_{-k})\!\ge\!u_k(\hat{p}_k, p_{-k})$. Correspondingly, 
$f_{\alpha}(\hat{p}_k, p_{-k})\!=\!0$ and $f_{\alpha}(\tilde{p}_k, p_{-k})\!\ge\!0$. Observing the following condition for $f_{\alpha}(\hat{p}_k,
p_{-k})=0$:
\begin{equation}
 \label{eq_def_prov3}
 W\log(1+G_kp_k)=(\lambda_sh_{k,0}p_k+\alpha)(p_k+p_a),
\end{equation}
in which $G_k$ is given in (\ref{eq_thm2_2}). We note that the right-hand side of (\ref{eq_def_prov3}) is a strictly increasing concave function 
and the left-hand side is a strictly increasing convex function in $[0, {p}^{\max}_k]$. Then the solution to $f_{\alpha}(\hat{p}_k, p_{-k})\!=
\!0$ is unique in $[0, {p}^{\max}_k]$, so $f_{\alpha}(\tilde{p}_k, p_{-k})\!>\!f_{\alpha}(\hat{p}_k, p_{-k})$. Based on the definition of 
concave function, the following inequality also holds for 
$0<\delta<1$: 
\begin{equation}
 \label{eq_def_prov4}
 \begin{array}{ll}
 f_{\alpha}(\delta\hat{p}_k+(1-\delta)\tilde{p}_k)&\ge{\delta}f_{\alpha}(\hat{p}_k)+(1-\delta)f_{\alpha}(\tilde{p}_k)\\
 &>f_{\alpha}(\hat{p}_k)=0. 
 \end{array}
\end{equation}
Therefore, the condition for strict quasiconcavity holds as $u_k(\delta\hat{p}_k+(1-\delta)\tilde{p}_k)>\min\left(u_k(\tilde{p}_k, 
p_{-k}),u_k(\hat{p}_k, p_{-k})\right)=\alpha$. Then Lemma \ref{le_single_value} is a direct conclusion based on Lemma \ref{le_global_op}.

\section{Proof of Theorem 2}
\begin{Lemma}[\!\!\!\cite{han2012game}]
 \label{le_signle_ne}
 If the best-response functions of a non-cooperative game $\mathcal{G}$ are standard functions for all the players, then the game has a 
 unique NE in pure strategies.
\end{Lemma}
Observing (\ref{eq4}), we note that the maximum net-payoff function is lower-bounded by 0 with $p_k\!=\!0$. Since the power vector is always 
nonnegative, the property of positivity in the BR for each FU $k$ immediately follows Lemma \ref{le_single_value}.

We denote $I_k(p_{-k})\!=\!(N_k\!+\!h_{0,k}p_0\!+\!\sum_{j\in{\mathcal{K}}\backslash\{k\}}h_{j,k}p_j)/h_{k,k}$. Noting that $I_k(p_{-k})$ is 
a strictly increasing function of $p_{-k}$, monotonicity of $\hat{p}_k(p_{-k})$ can be illustrated by proving that function $p_k(I_k)$ is 
monotonically increasing in $I_k$. From (\ref{eq_thm2_1}) we obtain the necessary condition for $p_k$ to be the BR as $\frac{\partial u_k}
{\partial p_k}\!=0\!$, which is equivalent to
\begin{align}
 \label{eq_th_3_2}
 \omega(p_k, I_k)=&\frac{W(p_k+p_a)}{I_k}-W(1+\frac{p_k}{I_k})\log(1+\frac{p_k}{I_k}) \nonumber\\
 &-\lambda_kh_{k,k}(p_k+p_a)^2(1+\frac{p_k}{I_k})=0. 
\end{align}
Since $\frac{\partial{p_k}}{\partial{I_k}}=-\frac{\partial{\omega}}{\partial{I_k}}/\frac{\partial{\omega}}{\partial{p_k}}$, we have
\begin{equation}
 \label{eq_th_3_3}
 \frac{\partial{\omega}}{\partial{I_k}}\!=\!\frac{1}{I^2_k}\left(\xi(p_k)\!+\!Wp_k\log(1\!+\!\frac{p_k}{I_k})\!-\!Wp_a\right), 
\end{equation}
in which $\xi(p_k)=\lambda_kh_{k,k}(p^2_ap_k+2p_ap_k^2+p_k^3)$, and
\begin{equation}
 \label{eq_th_3_4}
 \frac{\partial{\omega}}{\partial{p_k}}\!=-\frac{1}{I_k}\left(\zeta(p_k)+{W\log(1\!+\!\frac{p}{I_k})}\right)，
\end{equation}
in which $\zeta(p_k)=\lambda_kh_{k,k}(p_a\!+\!p_k)(p_a\!+\!2I_k\!+\!3p_k)$. We note that $\frac{\partial{\omega}}{\partial{p_k}}<0$, then
the property of monotonicity holds iff $\frac{\partial{\omega}}{\partial{I_k}}\ge0$. With the inequality of logarithmic function 
\cite{topsok2006some}, $\log(1+x)\ge x/(1+x)$ for $x\ge-1$, we obtain
\begin{equation}
 \label{eq_th_3_5}
 \frac{\partial{\omega}}{\partial{I_k}}\ge\frac{1}{I^2_k}\left(\xi(p_k)\!+\frac{W}{I_k+p_k}\left({p_k^2-p_a(I_k+p_k)}\right)\right).
\end{equation}
Therefore, $\frac{\partial{p_k}}{\partial{I_k}}\ge0$ if $p_k^2-p_a(I_k+p_k)\ge0$. Then we obtain the condition for $\hat{p}_k(p_{-k})$ to be 
monotonic as:
\begin{equation}
 \label{eq_th_3_6}
 \frac{h_{k,k}p_k}{N_k+h_{0,k}p_0+\sum_{j\in\mathcal{K}}h_{j,k}p_k}\ge\frac{p_a}{p_k}.
\end{equation}

The proof of scalability is based on Lemma \ref{le_single_value}. According to Lemma \ref{le_single_value}, there is a one-to-one 
correspondence between $\hat{p}_k$ and $\hat{\gamma}_k$. We define $J_k(p_{-k})=\sum_{j\in{\mathcal{K}}\backslash\{k\}}h_{j,k}p_j$,
then from (\ref{eq2}) the BR can be written as
\begin{equation}
 \label{eq_th_3_7}
 \hat{p}_k(p_{-k})=\frac{\hat{\gamma}_k\left(N_0+h_{0,k}p_0+J_k(p_{-k})\right)}{h_{k,k}}.
\end{equation}
From $\frac{\partial u_k}{\partial p_k}\!=\!0$, we can prove that $\frac{\partial \gamma_k}{\partial J_k}\!\le\!0$ with the same technique as
proving monotonicity (which is omitted for conciseness). Therefore, $\gamma_k$ is a decreasing function of $J_k$. Since $J_k(p_{-k})$ is a 
standard function \cite{1198610}, we realize that if $\alpha>1$, $\hat{\gamma}_k(\alpha p_{-k})\le\hat{\gamma}_k(p_{-k})$ and $J_k
(\alpha p_{-k})\le\alpha J_k(p_{-k})$. Then, monotonicity holds for $\hat{p}_k(p_{-k})$ since
\begin{equation}
 \label{eq_th_3_9}
 \hat{p}_k(\alpha p_{-k})\le\frac{\hat{\gamma}_k(p_{-k})\left(N_0\!+\!h_{0,k}p_0\!+\!\alpha J_k(p_{-k})\right)}{h_{k,k}}\le\alpha 
 \hat{p}_k(p_{-k}).
\end{equation}
Therefore, $\hat{p}_k(p_{-k})$ is a standard function. Based on Lemma \ref{le_signle_ne}, the NE of the follower subgame (\ref{eq6}) is unique.

\section{The Derivation of Asymptotic Behavior Models}
The necessary condition for the NE of the FUs is given by (\ref{eq_thm_se_1}). As $\lambda_k\rightarrow0$, the solution of the BRs in the follower 
game will be independent of $\lambda_k$ and can be approximated by
\begin{equation}
\label{eq_prop_abm_1}
(1+\gamma_k)\log(1+\gamma_k)-W\gamma_k-WG_kp_a=0, k\in\mathcal{K},
\end{equation}
in which $G_k$ is given by (\ref{eq_thm2_2}). From Lemma \ref{le_single_value}, the solution to (\ref{eq_prop_abm_1}) is unique. Then the payment 
by FU $k$ will be a linear function of $\lambda_k$, $r_k=h_{k,0}\hat{p}_k\lambda_k$.

As $\lambda_k\rightarrow\infty$, $\forall k\in\mathcal{K},p_k\rightarrow0$. From (\ref{eq_thm_se_1}) we obtain:
\begin{equation}
\label{eq_prop_abm_2}
h_{k,k}\lambda_k(p_k+p_a)\!=\!\frac{Wh_{k,k}}{I_k\!+h_{k,k}\!p_k}\!-\frac{\!W\log(1+\frac{h_{k,k}p_k}{I_k})}{(p_k+p_a)},
\end{equation}
in which $I_k$ is the sum of interference plus noise defined in Appendix C. With $p_k\rightarrow0,\forall k$, (\ref{eq_prop_abm_2}) can be approximated by:
\begin{equation}
\label{eq_prop_abm_3}
h_{k,k}\lambda_k(p_k+p_a)\!\approx\!\frac{Wh_{k,k}}{N_k+h_{0,k}p_0}.
\end{equation}
From (\ref{eq_prop_abm_3}) we obtain 
\begin{equation}
\label{eq_prop_abm_4}
p_k\!\approx\!\frac{W}{\lambda_k({N_k+h_{0,k}p_0})}-p_a.
\end{equation}
Based on (\ref{eq_prop_abm_1}) and (\ref{eq_prop_abm_4}) we obtain the asymptotic models (\ref{eq12}) and (\ref{eq14}) for the FU behaviors.

\section{Proof of Theorem 4}

\begin{Lemma}[\!\!\!\cite{ECTA:ECTA376}]
 \label{le_transform}
 Consider game $\mathcal{G}$ with payoff function $u_k(\mathbf{s})$ for player $k$. If the sequence of stochastic fictitious play converges,
 it also holds for game $\tilde{\mathcal{G}}$ with payoff function $\tilde{u}_k(\mathbf{s})=\kappa_ku_k(\mathbf{s})+
 \vartheta(s_k)$, in which $\kappa_k$ is a positive constant and $\vartheta(s_k)$ only depends on player $k$'s own behavior.
\end{Lemma}

\begin{Lemma}[\!\!\cite{ECTA:ECTA376}]
 \label{le_fp_supermodular}
 Consider stochastic fictitious play starting from any arbitrary $\pmb\pi^0$. If $\mathcal{G}$ is a supermodular game, then
 \begin{equation}
  \label{eq_le_fp_sup}
  \Pr(\omega\{\pmb\pi^0_k\}\subset{RP} \textrm{ or } \omega\{\pmb\pi^0_k\}\subset M_i\cap[\underline{\pmb\pi}_k,\overline{\pmb\pi}_k] \textrm{ for } k)=1
  \nonumber
 \end{equation}
 where $\omega\{\pmb\pi^0_k\}$ is an invariant set of the solution trajectory starting from $\pmb\pi^0_k$. $RP$ is the set of rest points (fixed 
 points) and $\underline{\pmb\pi}_k, \overline{\pmb\pi}_k\in RP$ such that $RP\subset[\underline{\pmb\pi}_k,\overline{\pmb\pi}_k]$. $M_i$ is a finite 
 Lipschitz submanifold and every persistent non-convergence trajectory is asymptotic to one in $M_i$.
\end{Lemma}

The proof of Theorem \ref{Thm_Converge} starts by investigating the property of supermodularity in the FU subgame. With discrete power set, the 
expected payoff function (\ref{eq16}) can be rewritten as:
\begin{equation}
  \label{eq_pro_5_1}
   u_k(\pmb\pi_k,\pmb\pi_{-k},\pmb\lambda)\!=\!\sum_{\mathbf{p}\in\mathcal{P}}\psi_k(p_k,p_{-k})\prod_{i\in\mathcal{K},j}\pi^j_i-\sum_j
   \lambda_kh_{k,0}p^j_k\pi^j_k,
\end{equation}
which is in the form ${u}_k(\pmb\pi)\!=\!\kappa_k\tilde{u}_k(\mathbf{\pmb\pi})\!+\!\vartheta(\pmb\pi_k)$ with $\kappa_k\!=\!1$. By Lemma 
\ref{le_transform}, we only need to check the game with payoff
\begin{equation}
  \label{eq_pro_5_2}
   \tilde{u}_k(\pmb\pi_k,\pmb\pi_{-k},\pmb\lambda)\!=\!\sum_{\mathbf{p}\in\mathcal{P}}\psi_k(p_k,p_{-k})\prod_{i\in\mathcal{K},j}\pi^j_i.
\end{equation}
With Definition \ref{def_sg} and Lemma \ref{le_td}, it is easy to check that the game with payoff function (\ref{eq_pro_5_2}) and mixed strategies
is a supermodular game. Based on Theorem 6 of \cite{leslie2003convergent}, the process of $\pi_k^t$ produced by (\ref{eq18})-(\ref{eq20}) will 
almost surely be an asymptotic pseudotrajectory of the smooth BR dynamics:
\begin{equation}
  \label{eq_pro_5_3}
   \dot{\pi}_k=\beta_k(\pmb\pi_{-k})-\pi_k.\nonumber
\end{equation}
Then, based on Lemma \ref{le_fp_supermodular}, for the game with payoff (\ref{eq_pro_5_2}) stochastic fictitious play almost surely converges 
with any arbitrary initial $\pmb\pi^0$. With Lemma \ref{le_transform}, the convergence holds for the original subgame with payoff $u_k$, so 
Theorem \ref{Thm_Converge} is proved.


%
\bibliographystyle{IEEEtran}
\bibliography{Reference}
\end{document}